\newcommand{\mysec}[1]{Section~#1}
\newcommand{\mydef}[1]{Definition~#1}
\newcommand{\myfig}[1]{Figure~#1}
\newcommand{\mythm}[1]{Theorem~#1}
\newcommand{\UICm}[1]{\UIC{$#1$}}
\newcommand{\AXCm}[1]{\AXC{$#1$}}
\newcommand{\BICm}[1]{\BIC{$#1$}}
\def\land{\wedge}
\def\lor{\vee}
\def\limp{\Rightarrow}
\def\fa{\forall}
\def\ex{\exists}
\def\gen{\lrcorner}
\newcommand{\gpos}[1]{{{#1}^p}} 
\newcommand{\gneg}[1]{{{#1}^n}} 
\def\weak{\mbox{weak}}
\def\contr{\mbox{contr}}
\title{Polarizing Double-Negation Translations}
\author{ M\'elanie Boudard\inst{1} \and Olivier Hermant\inst{2}}
\institute{PRiSM, Univ. de Versailles-St-Quentin-en-Yvelines
	\\CNRS, France
	\\ \email{melanie.boudard@prism.uvsq.fr}
    \and CRI, MINES ParisTech 
    \\ \email{olivier.hermant@mines-paristech.fr}
    }
\begin{document}

\maketitle

\begin{abstract}
Double-negation translations are used to encode and decode classical proofs 
in intuitionistic logic. We show that, in the cut-free fragment, we can 
simplify the translations and introduce fewer negations. To achieve this, 
we consider the polarization of the formul\ae{} and adapt those translation 
to the different connectives and quantifiers. We show that the embedding 
results still hold, using a customized version of the focused classical sequent 
calculus. We also prove the latter equivalent to more usual versions of the sequent calculus. This polarization process allows lighter embeddings, and sheds some light on the  
relationship between intuitionistic and classical connectives.\\

{\bf Keywords: } classical logic, intuitionnistic logic, double-negation translation, focusing.
\end{abstract}

\section{Introduction}

The relationship between different formal systems is a longstanding
field of studies, and involves for instance conservativity, relative 
consistency or independence problems \cite{PCoh66}. 
As for deductive systems, the natural question is to find a conservative 
encoding of formul\ae{}. By conservative, we mean an encoding of formul\ae{} 
such that a formula is provable in the first system if and only if its 
encoding is provable in the second system. This work was pioneered by Kolmogorov \cite{AKol25}, G\"odel \cite{KGod33} and Gentzen \cite{GGen36} for classical and intuitionistic logics. There exist several classes of sequents that are known to be classically provable if and only if they are intuitionistically provable \cite{Schwichtenberg2013}.

In this paper, we refine those translations by removing a large number of unnecessary negations. Instead of focusing on invariant classes as in \cite{Schwichtenberg2013}, we consider a translation on all the formul\ae{}. A common point with this work, however, is the use of syntactic transformations. The proof systems we consider are the cut-free  intuitionistic and classical sequent calculi \cite{HSchATro96}. This allows two remarks:

\begin{itemize}
\item the left rules of both calculi are identical; therefore it seems 
natural to translate them by themselves, when possible.
\item In the absence of the cut rule, a formula is never active in different sides (both as an hypothesis and as a conclusion) of the turnstyle, having therefore a well-defined {\em polarity}. This last fact holds for all the rules except the axiom rule, which is easily dealt with, by an $\eta$-expansion-like argument, i.e. decomposing the formula by structural rules until we get axioms between atomic formula only.

\end{itemize}

In summary, we can avoid the introduction of negations on formul\ae{} 
belonging to the ``left'' (or hypothesis) side of sequents. We also 
introduce further refinements, inspired by those of \cite{KGod33,GGen36},
to remove even more negations in the translation, based on the
observation that some right-rules are also identical in
the classical and intuitionistic calculi. To show conservativity by 
syntactic means without the cut rule, we need to impose a focusing discipline on the right-hand side of the classical sequent calculus, forced by the single-formula 
condition on the right-hand side of an intuitionistic sequent. We dedicate \mysec{\ref{sec:foc}} to the study of a customized focused sequent calculus.

The price to pay of an asymmetric translation is that the result misses some modularity since we dismiss the cut rule: given a proof of a $A$ and a proof of $A \limp B$, we cannot combine them with a cut rule. Both translations of $A$ are not the same and so the translations of the proofs do not compose {\em directly}. See also the discussion in \mysec{\ref{sec:conc}}.

The paper is organized as follows. In \mysec{\ref{sec:pre}}, we give a brief 
overview of the background material, in particular the negative translations.
In \mysec{\ref{sec:Kpol}}, we introduce a first polarized refinement of 
Kolmogorov negative translation, while \mysec{\ref{sec:foc}} discusses 
the properties of the focused sequent calculus that we need in 
\mysec{\ref{sec:GGpol}} to show that the polarized refinement of 
Gentzen-G\"odel negative translation still has the same properties than 
the other translations. \mysec{\ref{sec:conc}} concludes the paper.

\section{Prerequisites} \label{sec:pre}

Here, we briefly recall the syntax of first-order logic, sequent calculus and the already known double-negation translations.

\subsection{First-Order Logic}

We assume that the reader is familiar with one-sorted first-order logic
\cite{HSchATro96}: terms are composed of variables and function symbols applied to terms along their arities, and formul\ae{} are either predicate symbols applied to terms along their arities or composed ones with the help of the conjunction ($\land$), disjunction ($\lor$), implication ($\limp$), negation ($\neg$) connectives and the universal ($\fa$) and existential ($\ex$) quantifiers.

To shorten the statement of our results and their proofs, we also define an operator that removes the trailing negation of formul\ae{}, if any, and otherwise adds it.

\begin{definition}[antinegation] \label{def:antinegation}
Let $A$ be a formula, we let $\gen A$ be:
\begin{itemize}
\item $B$ if $A$ is equal to $\neg B$
\item $\neg A$ otherwise.
\end{itemize}
\end{definition}

Note that $\gen$ is not a connective, it is an operator, similar to Boolean complement in that $\gen\neg$ is the identity. In particular it has no associated rule in the sequent calculus. For instance $\gen P(a)$ {\em is the same as} $\neg P(a)$ while $\gen \neg (A \land B)$ {\em is the same as} $ (A \land B)$.

\subsection{Sequent Calculi}

Since they will be discussed in details in the next sections, we
explicitly give the details of the classical and intuitionistic sequent
calculi. A sequent is a pair of two multisets of formul\ae{},
denoted $\Gamma \vdash \Delta$. The comma serves as a shorthand for
multi-set union and $\Gamma, A$ is an overloaded notation for $\Gamma, \{ A \}$.\\

The classical sequent calculus is presented in
\myfig{\ref{fig:clas}}. The formula that is decomposed is called the
      {\em active} formula. The intuitionistic sequent calculus
differs from the classical in the restriction imposed to the
right-hand sides of sequents: it must be either empty, or reduced to
one formula. Consequently, the following rules are modified: contr${}_R$
disappears; in the first premiss of the $\limp_L$ rule and the axiom rule, $\Delta$ is empty; finally, the $\lor_R$ rule splits to account for the choice of keeping $A$ or $B$. For clarity, the intuitionistic sequent calculus is presented in \myfig{\ref{fig:intuit}}.

Note that, as announced, we do not consider the cut rule to be part
of the calculus; so we reason in cut-free calculi.

\begin{figure}
  \framebox[12.2cm][c]
{\parbox{12.2cm}
{\scriptsize
  \begin{center}
  \begin{tabular}{c@{\hspace{1cm}}c}
\multicolumn{2}{c}{   \AXCm{~} \RL{ax}
    \UICm{\Gamma, A \vdash A, \Delta} \DP }
\\\\

    \AXCm{\Gamma, A, B \vdash \Delta} \RL{$\land_L$}
    \UICm{\Gamma, A \land B \vdash \Delta} \DP
    &
    \AXCm{\Gamma \vdash A, \Delta}
    \AXCm{\Gamma \vdash B, \Delta} \RL{$\land_R$}
    \BICm{\Gamma \vdash A \land B, \Delta} \DP
    \\\\

    \AXCm{\Gamma, A \vdash \Delta} \AXCm{\Gamma, B \vdash \Delta}
    \RL{$\lor_L$}
    \BICm{\Gamma, A \lor B \vdash \Delta}\DP
    &
    \AXCm{\Gamma \vdash A, B, \Delta} \RL{$\lor_R$}
    \UICm{\Gamma \vdash A \lor B, \Delta} \DP
    \\\\

    \AXCm{\Gamma \vdash A, \Delta} \AXCm{\Gamma, B \vdash \Delta}
    \RL{$\limp_L$}
    \BICm{\Gamma, A \limp B \vdash \Delta} \DP
    &

    \AXCm{\Gamma, A \vdash B, \Delta}
    \RL{$\limp_R$}
    \UICm{\Gamma \vdash A \limp B, \Delta} \DP
    \\\\

    \AXCm{\Gamma \vdash A, \Delta}
    \RL{$\neg_L$}
    \UICm{\Gamma, \neg A \vdash \Delta} \DP
    &

    \AXCm{\Gamma, A \vdash \Delta}
    \RL{$\neg_R$}
    \UICm{\Gamma \vdash \neg A, \Delta} \DP
    \\\\

    \AXCm{\Gamma, A[c/x] \vdash \Delta} \RL{ $\ex_L$}
    \UICm{\Gamma, \ex{x} A \vdash \Delta} \DP
    &
    \AXCm{\Gamma \vdash A[t/x], \Delta} \RL{
         $\ex_R$}
    \UICm{\Gamma \vdash \ex{x} A, \Delta} \DP

  \\\\
    \AXCm{\Gamma, A[t/x] \vdash \Delta} \RL{$\fa_L$}
    \UICm{\Gamma,  \fa{x} A \vdash \Delta} \DP
    &
    \AXCm{\Gamma \vdash A[c/x], \Delta} \RL{$\fa_R$}
    \UICm{\Gamma \vdash \fa{x} A, \Delta} \DP
    \\\\
    \AXCm{\Gamma, A, A \vdash \Delta} \RL{$\contr_L$}
    \UICm{\Gamma, A \vdash \Delta} \DP
    &
    \AXCm{\Gamma \vdash A, \Delta} \RL{$\contr_R$}
    \UICm{\Gamma \vdash A, A, \Delta} \DP
    \\\\
    \AXCm{\Gamma \vdash \Delta} \RL{$\weak_L$}
    \UICm{\Gamma, A \vdash \Delta} \DP
    &
    \AXCm{\Gamma \vdash \Delta} \RL{$\weak_R$}
    \UICm{\Gamma \vdash A, \Delta} \DP
    \\\\
  \end{tabular}

  where, in $\fa_L$ and $\ex_R$, $c$ is a fresh constant
  and, in $\fa_R$ and $\ex_L$, $t$ is any term.
\end{center}
}}
  \caption{Classical sequent calculus}
  \label{fig:clas}
\end{figure}

\begin{figure}
  \framebox[12.2cm][c]
{\parbox{12.2cm}
{\scriptsize
  \begin{center}
  \begin{tabular}{c@{\hspace{1cm}}c}
\multicolumn{2}{c}{   \AXCm{~} \RL{ax}
    \UICm{\Gamma, A \vdash A} \DP }
\\\\

    \AXCm{\Gamma, A, B \vdash \Delta} \RL{$\land_L$}
    \UICm{\Gamma, A \land B \vdash \Delta} \DP
    &
    \AXCm{\Gamma \vdash A}
    \AXCm{\Gamma \vdash B} \RL{$\land_R$}
    \BICm{\Gamma \vdash A \land B} \DP
    \\\\

    \AXCm{\Gamma, A \vdash \Delta} \AXCm{\Gamma, B \vdash \Delta}
    \RL{$\lor_L$}
    \BICm{\Gamma, A \lor B \vdash \Delta}\DP
    &
    \multicolumn{1}{c}{
    \AXCm{\Gamma \vdash A} \RL{$\lor_{R1}$~~~}
    \UICm{\Gamma \vdash A \lor B} \DP
    
    \AXCm{\Gamma \vdash B} \RL{$\lor_{R2}$}
    \UICm{\Gamma \vdash A \lor B} \DP
    }
    \\\\

    \AXCm{\Gamma \vdash A} \AXCm{\Gamma, B \vdash \Delta}
    \RL{$\limp_L$}
    \BICm{\Gamma, A \limp B \vdash \Delta} \DP
    &

    \AXCm{\Gamma, A \vdash B}
    \RL{$\limp_R$}
    \UICm{\Gamma \vdash A \limp B} \DP
    \\\\

    \AXCm{\Gamma \vdash A}
    \RL{$\neg_L$}
    \UICm{\Gamma, \neg A \vdash \Delta} \DP
    &

    \AXCm{\Gamma, A \vdash }
    \RL{$\neg_R$}
    \UICm{\Gamma \vdash \neg A} \DP
    \\\\

    \AXCm{\Gamma, A[c/x] \vdash \Delta} \RL{ $\ex_L$}
    \UICm{\Gamma, \ex{x} A \vdash \Delta} \DP
    &
    \AXCm{\Gamma \vdash A[t/x]} \RL{
         $\ex_R$}
    \UICm{\Gamma \vdash \ex{x} A} \DP

    \\\\
    \AXCm{\Gamma, A[t/x] \vdash \Delta} \RL{$\fa_L$}
    \UICm{\Gamma,  \fa{x} A \vdash \Delta} \DP
    &
    \AXCm{\Gamma \vdash A[c/x]} \RL{$\fa_R$}
    \UICm{\Gamma \vdash \fa{x} A} \DP
    \\\\
    \AXCm{\Gamma, A, A \vdash \Delta} \RL{$\contr_L$}
    \UICm{\Gamma, A \vdash \Delta} \DP
    &
    \\\\
    \AXCm{\phantom{A,} \Gamma \vdash \Delta} \RL{$\weak_L$}
    \UICm{\Gamma, A \vdash \Delta} \DP
    &
    \AXCm{\Gamma \vdash \phantom{A}} \RL{$\weak_R$}
    \UICm{\Gamma \vdash A} \DP
    \\\\
  \end{tabular}

  where, in $\fa_L$ and $\ex_R$, $c$ is a fresh constant
  and, in $\fa_R$ and $\ex_L$, $t$ is any term.
\end{center}
}}
  \caption{Intuitionistic sequent calculus}
  \label{fig:intuit}
\end{figure}

\subsection{Negative Translations} \label{sec:negative}

In this section, we briefly recall four existing translations \cite{FerOli10,ATroDvDal88}. In 1925, the first translation is published by Kolmogorov \cite{AKol25}. This translation involves adding a double negation in front of every subformula: 

\begin{tabular}{ rclrcl}
     $A^{Ko}$&$ \equiv$&$ \neg\neg A$  \textit{for A atomic} &
         $(\neg A)^{Ko} $&$ \equiv $&$ \neg\neg(\neg A^{Ko})$\\
     $(A \wedge B)^{Ko}$&$ \equiv$&$ \neg\neg(A^{Ko} \wedge B^{Ko})$&
         $(\forall x A)^{Ko} $&$ \equiv $&$ \neg\neg\forall x A^{Ko}$\\
     $(A \vee B)^{Ko}$&$ \equiv$&$ \neg\neg(A^{Ko} \vee B^{Ko})$&
         $(\exists x A)^{Ko}$&$  \equiv$&$  \neg\neg\exists x A^{Ko}$\\
     $(A \Rightarrow B)^{Ko} $&$\equiv$&$ \neg\neg(A^{Ko} \Rightarrow B^{Ko})$&
	& & \\
\end{tabular}\\

With the Kolmogorov's translation, $A$ is provable using classical
logic if and only if $A^{Ko}$ is provable 
using intuitionistic logic.\\

A few years later, G\"{o}del\cite{KGod33}\label{def:gg},
and independently Gentzen\cite{GGen36}, 
proposed a new translation, where disjunctions and existential
quantifiers are replaced by a combination of negation and their De Morgan
duals, respectively conjunctions and universal quantifiers: 

\begin{tabular}{r@{$~\equiv~$}l@{~~}r@{$~\equiv~$}l}
      $(A^{gg}) $&$ \neg\neg A$ \textit{for A atomic}&
         $(\neg A)^{gg}$&$ \neg A^{gg}$\\
     $(A \wedge B)^{gg}$&$ A^{gg} \wedge B^{gg}$&
        $(\forall x A)^{gg} $&$\forall x A^{gg}$  \\
     $(A \vee B)^{gg}$&$ \neg(\neg A^{gg}\wedge\neg B^{gg})$&
         $(\exists x A)^{gg}$&$\neg\forall x\neg A^{gg}$\\
    $(A \Rightarrow B)^{gg} $&$A^{gg}  \Rightarrow B^{gg}$ \\
\end{tabular}\\

As Kolmogorov's translation, G\"{o}del-Gentzen's translation allows to show that $A$ is provable using classical logic if and only if $A^{gg}$ is provable using intuitionistic logic.\\

Kuroda \cite{SKur51} defined in 1951 a new translation: 

\begin{tabular}{ r@{$~\equiv~$}l@{~~}rcl}
$A^{Ku}  $&$  A$ \textit{for A atomic} & $(\neg A)^{Ku}$&$ \equiv $&$ \neg A^{Ku}$\\
$(A \wedge B)^{Ku} $&$ A^{Ku} \wedge B^{Ku}$&$(\forall x A)^{Ku} $&$ \equiv $&$ \forall x \neg\neg A^{Ku}$ \\
$(A \vee B)^{Ku} $&$ A^{Ku} \vee B^{Ku}$& $(\exists x A)^{Ku}  $&$\equiv $&$ \exists x A^{Ku}$\\
$(A \Rightarrow B)^{Ku}  $&$A^{Ku} \Rightarrow B^{Ku}$
\end{tabular}\\

\noindent $A$ is provable classically if and only if $\neg \neg A^{Ku}$ is provable intuitionistically.\\

More recently, Krivine \cite{JKri90} has introduced a fourth translation: 

\begin{tabular}{r@{$~\equiv~$}lr@{$~\equiv~$}l}
$A^{Kr}$&$ \neg A$ \textit{for A atomic}& $(\neg A)^{Kr}$ & $\neg A^{Kr}$\\
$(A \wedge B)^{Kr} $&$ A^{Kr} \vee B^{Kr}$&$(\forall x A)^{Kr}$ & $\exists A^{Kr}$ \\
$(A \vee B)^{Kr} $&$ A^{Kr} \land B^{Kr}$&$(\exists x A)^{Kr}$ & $ \neg \exists x\neg A^{Kr}$\\
$(A \Rightarrow B)^{Kr}$&$\neg A^{Kr} \wedge B^{Kr}$
\end{tabular}\\

\noindent $A$ is provable classically if and only if $\neg A^{Kr}$ is provable intuitionistically.\\

Using these existing translations, in particular Kolmogorov's and G\"{o}del-Gentzen's translations, we propose to simplify them as described below.

\section{Polarizing Kolmogorov's Translation} \label{sec:Kpol}

As in Kolmogorov's translation, let us define the polarized Kolmogorov's translation:
\begin{definition} \label{def:polarization} 
\textit{Let A,B,C and D be propositions. An occurrence of A in B is}
\begin{itemize}    
  \item positive if:
	\begin{itemize}
 		\item B = A.
		\item B = $C \wedge D$ and the occurrence of A is in C or in D 
        and is positive.
		\item B = $C \vee D$ and the occurrence of A is in C or in D 
        and is positive.
		\item B = $C \Rightarrow D$  and the occurrence of A is in C (resp. in D) and is negative (resp. positive).
		\item B = $\neg C$ and the occurrence of A is in C and is negative.
		\item B = $\forall x C$  and the occurrence of A is in C and is positive.
		\item B = $\exists x C$  and the occurrence of A is in C and is positive.
	\end{itemize}
  \item negative if:
	\begin{itemize}
		\item B = $C \wedge D$ and the occurrence of A is in C or in D
        and is negative.
		\item B = $C \vee D$ and the occurrence of A is in C or in D
        and is negative.
		\item B = $C \Rightarrow D$  and the occurrence of A is in C (resp. in D)
        and is positive (resp. negative).
 		\item B = $\neg C$  and the occurrence of A is in C and is positive.
		\item B = $\forall x C$  and the occurrence of A is in C and is negative.
		\item B = $\exists x C$  and the occurrence of A is in C and is negative.
	\end{itemize}
\end{itemize}
\end{definition}

\begin{definition} \textit{Let A and B be propositions. We define by induction on the
 structure of propositions the positive ($K^+$) and negative translation ($K^-$): }

\begin{tabular}{ rclrcl}
      $ A^{K^+} $&$ \equiv $&$ A$ \textit{if A is atomic}&
          $ A^{K^-}  $&$\equiv$&$  A$ \textit{if A is atomic}\\
      $(A \wedge B)^{K^+} $&$\equiv$&$ A^{K^+} \wedge B^{K^+}$&
         $(A \wedge B)^{K^-} $&$\equiv$&$  \neg\neg A^{K^-} \wedge\neg\neg B^{K^-}$\\
     $(A \vee B)^{K^+} $&$\equiv$&$ A^{K^+} \vee B^{K^+}$&
        $(A \vee B)^{K^-} $&$\equiv$&$  \neg\neg A^{K^-} \vee \neg\neg B^{K^-}$\\
    $(A \Rightarrow B)^{K^+} $&$\equiv$&$ \neg\neg A^{K^-} \Rightarrow B^{K^+}$&
	$(A \Rightarrow B)^{K^-} $&$\equiv$&$ A^{K^+} \Rightarrow  \neg\neg B^{K^-}$ \\
    $(\neg A)^{K^+}  $&$\equiv$&$ \neg A^{K^-}$&
	$(\neg A)^{K^-}  $&$\equiv$&$\neg A^{K^+}$\\
     $(\forall x A)^{K^+}  $&$\equiv$&$ \forall x A^{K^+}$&
	$(\forall x A)^{K^-}  $&$\equiv$&$ \forall x \neg\neg A^{K^-}$ \\
    $(\exists x A)^{K^+}  $&$\equiv$&$ \exists x A^{K^+}$&
	$(\exists x A)^{K^-}  $&$\equiv$&$\exists x \neg\neg A^{K^-}$ \\
\end{tabular}
\end{definition}

Notice how, compared to \mysec{\ref{sec:negative}}, we introduce double negations in front of subformul\ae{} instead of the whole formula. For instance axioms are translated by themselves, and the price to pay is, as for Kuroda's and Krivine's translations, a negation of the whole formula in the following theorem.

\begin{theorem} \label{thm:pol-kol}
If the sequent $\Gamma \vdash \Delta$ is provable in the classical sequent
calculus then $\Gamma^{K^+}, \neg \Delta^{K^-} \vdash$ is
provable in the intuitionistic sequent calculus.
\end{theorem}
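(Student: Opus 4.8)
\emph{Proof proposal.} The plan is to argue by induction on a cut-free derivation of $\Gamma \vdash \Delta$ in the classical calculus of \myfig{\ref{fig:clas}}, turning it rule by rule into a (cut-free) intuitionistic derivation of $\Gamma^{K^+}, \neg \Delta^{K^-} \vdash$. Before the induction two routine facts are needed: that $(\cdot)^{K^+}$ and $(\cdot)^{K^-}$ commute with substitution of a term for a variable (a trivial structural induction, used in the quantifier cases), and that $A^{K^+} \vdash A^{K^-}$ is intuitionistically derivable for every $A$ (an easy induction on $A$, using only that $C \vdash \neg\neg C$ always holds); alternatively one first $\eta$-expands the classical proof so that all its axioms are between atomic formulas, where $A^{K^+} = A = A^{K^-}$ and nothing is required. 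The \textbf{axiom} case then follows by a $\neg_L$ step on top of $A^{K^+} \vdash A^{K^-}$, followed by weakening in the side contexts.

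Next come the bookkeeping cases. The left rules $\land_L, \lor_L, \fa_L, \ex_L$, together with left contraction and left weakening, are translated to themselves, because $(\cdot)^{K^+}$ commutes with $\land, \lor, \fa, \ex$; the structural rules acting on the right in the classical proof become the corresponding structural rules on the left, acting on the translation $\neg B^{K^-}$ of the affected $\Delta$-formula. The classical $\neg_L$ is a no-op: the conclusion $\Gamma, \neg A \vdash \Delta$ and the premiss $\Gamma \vdash A, \Delta$ receive literally the same translation, namely $\Gamma^{K^+}, \neg A^{K^-}, \neg \Delta^{K^-} \vdash$. For $\limp_L$, applying $\neg_R$ to the translation of the left premiss $\Gamma \vdash A, \Delta$ produces $\Gamma^{K^+}, \neg \Delta^{K^-} \vdash \neg\neg A^{K^-}$, and an intuitionistic $\limp_L$ against the translation of the right premiss closes the case.

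The new content is in the classical right rules, where the uniform mechanism is: with a $\neg_L$ step, strip the outer negation that $\neg \Delta^{K^-}$ puts in front of the active formula — this uncovers exactly the connective that $(\cdot)^{K^-}$ introduced — then apply the matching intuitionistic right rule and, where needed, a $\neg_R$ step on each resulting sub-goal, so as to land precisely on the translation of the premiss(es). This goes through verbatim for $\neg_R$, $\limp_R$, $\land_R$, $\fa_R$ and $\ex_R$ (the quantifier cases also invoking the substitution fact). The only case that is not a one-to-one rule translation is $\lor_R$: its classical premiss $\Gamma \vdash A, B, \Delta$ retains \emph{both} disjuncts, whereas after stripping we must prove the intuitionistic disjunction $\neg\neg A^{K^-} \lor \neg\neg B^{K^-}$ on the right and hence commit to one side. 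I expect this to be the main obstacle; the fix is to duplicate the hypothesis $\neg(\neg\neg A^{K^-} \lor \neg\neg B^{K^-})$ by $\contr_L$, run one copy through $\neg_L$, then $\lor_{R1}$, then $\neg_R$ to reinstate $\neg A^{K^-}$ in the context, and repeat with the second copy and $\lor_{R2}$ to reinstate $\neg B^{K^-}$; the resulting sequent is the translation of $\Gamma \vdash A, B, \Delta$ carrying one superfluous copy of $\neg(\neg\neg A^{K^-} \lor \neg\neg B^{K^-})$, removed by $\weak_L$ before the induction hypothesis is applied. It is exactly here that having the $(\cdot)^{K^-}$-images sit \emph{under a negation} pays off: proving the negation of a disjunct via $\neg_R$ amounts to reinstating a left hypothesis rather than to proving the disjunct itself. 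Collecting all the cases yields the desired intuitionistic derivation.
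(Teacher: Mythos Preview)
Your proposal is correct and follows essentially the same approach as the paper: a rule-by-rule induction on the classical derivation, with the uniform pattern $\neg_L$/right-rule/$\neg_R$ for the right rules (the paper only spells out the $\ex_R$ case, which matches your description exactly). Your treatment is in fact more detailed than the paper's, in particular for the $\lor_R$ case, which the paper does not single out; one small remark is that if you consume the two contracted copies in sequence as you describe, no superfluous copy remains and the final $\weak_L$ is not needed, but this is harmless.
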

\begin{proof}
By induction on the proof-tree. Since this theorem is not the main
result of this paper, and is refined below (\mythm{\ref{thm:foc-int}}), let us process only one case. All other cases follow a similar pattern.

  \begin{tabular}{ccc}

    \AXCm{\pi}
    \UICm{\Gamma \vdash  \Delta, A[t/x]}
    \LeftLabel{$\exists_R$}
    \UICm{\Gamma \vdash \Delta,  \exists x A}
    \DP
    
    & $~~~\hookrightarrow~~$ &

    \AXC{IH($\pi$)}
    \UICm{\Gamma^{K^+},\neg\Delta^{K^-},\neg A^{K^-}[t/x]\vdash}
    \RightLabel{$\neg_R$}
    \UICm{\Gamma^{K^+},\neg \Delta^{K^-}\vdash \neg\neg A^{K^-}[t/x]}
    \RightLabel{$\exists_R$}
    \UICm{\Gamma^{K^+},\neg\Delta^{K^-}\vdash\exists x \neg\neg A^{K^-}}
    \RightLabel{$\neg_L$}
    \UICm{\Gamma^{K^+},\neg(\exists x \neg\neg A^{K^-}),\neg\Delta^{K^-}\vdash}
    \DP
  \end{tabular}
  
\noindent where IH($\pi$) denotes, here and later, the proof obtained by the application of the induction hypothesis on $\pi$.
\end{proof}

We also have the inverse translation.
\begin{theorem}
If the sequent $\Gamma^{K^+}, \neg \Delta^{K^-} \vdash   D^{K^-}$ is provable in the intuitionistic sequent calculus, then $\Gamma \vdash \Delta, D$  is provable in the classical sequent calculus.
\end{theorem}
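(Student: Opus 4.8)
The plan is to reduce the statement to two standard facts about the cut-free classical sequent calculus: that it subsumes intuitionistic provability, and that it removes double negations. First I would note the routine fact that every sequent derivable in the intuitionistic calculus of \myfig{\ref{fig:intuit}} is derivable in the classical calculus of \myfig{\ref{fig:clas}}: one reads the intuitionistic derivation as a classical one, inserting a right weakening wherever an intuitionistic rule (for instance the left premiss of $\limp_L$, or $\neg_R$) keeps the right-hand side smaller than its classical counterpart permits. Applied to the hypothesis, this gives a classical proof of $\Gamma^{K^+}, \neg \Delta^{K^-} \vdash D^{K^-}$.

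The core of the argument is a lemma proved by induction on the formula $A$, with all four statements proved simultaneously since $K^+$ and $K^-$ are mutually recursive through implication and negation: for every formula $A$, each of the sequents $A \vdash A^{K^+}$, $A^{K^+} \vdash A$, $A \vdash A^{K^-}$ and $A^{K^-} \vdash A$ is provable in the classical sequent calculus. The base case is immediate because the translations fix atomic formul\ae{}. Every induction step is mechanical: the translations differ from the identity only by double negations placed in front of immediate subformul\ae{}, so each case is closed by the matching classical logical rule, the induction hypotheses, weakenings, and the classical derivations of $C \vdash \neg\neg C$ and $\neg\neg C \vdash C$, spliced together with cuts. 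The cut rule is admissible in the classical calculus, so this use is harmless; a reader insisting on a strictly cut-free development can instead appeal to the invertibility lemmas for $\neg_L$ and $\neg_R$.

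To finish, I would assemble the pieces. From the lemma we get a classical proof of $D^{K^-} \vdash D$; for each $A \in \Gamma$ a classical proof of $A \vdash A^{K^+}$; and for each $A \in \Delta$, applying $\neg_L$ and then $\neg_R$ to $A^{K^-} \vdash A$, a classical proof of $\neg A \vdash \neg A^{K^-}$. Cutting each of these in turn against the classical proof of $\Gamma^{K^+}, \neg \Delta^{K^-} \vdash D^{K^-}$ replaces every translated formula by the original one, yielding a classical proof of $\Gamma, \neg \Delta \vdash D$. A final batch of cuts, one for each $A \in \Delta$, with the trivially provable sequents $\vdash \neg A, A$ — or equivalently $n$ uses of the invertibility of $\neg_L$ — moves $\neg \Delta$ to the right of the turnstile and produces $\Gamma \vdash \Delta, D$; should any cuts have been used along the way, cut-elimination for the classical sequent calculus delivers the cut-free proof the statement asks for.

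I do not expect a genuine obstacle. The only part requiring real care is the bookkeeping inside the key lemma, namely checking connective by connective and quantifier by quantifier that the double negations the polarized translation sprinkles over subformul\ae{} are classically transparent on both sides of the turnstile; everything else is standard sequent-calculus manipulation.
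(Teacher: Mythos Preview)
Your argument is correct, but it is not the route the paper takes. The paper proceeds by a direct structural induction on the intuitionistic derivation: each rule of \myfig{\ref{fig:intuit}} is replayed in the classical calculus of \myfig{\ref{fig:clas}}, with the extra $\neg\neg$'s inserted by $K^-$ absorbed on the spot by a local pattern of $\neg_L/\neg_R$ steps. Your approach instead factors the problem: first embed the intuitionistic proof unchanged into the classical calculus, then prove once and for all, by induction on formul\ae{}, that $A$ and $A^{K^\pm}$ are classically interderivable, and finally splice everything together with admissible cuts and the inversion of $\neg_L$. The paper's method yields an explicit, cut-free translation at the level of derivations and avoids invoking cut-elimination as a black box; your method is more modular and trades the rule-by-rule case analysis for a shorter connective-by-connective lemma, at the price of relying on cut admissibility for the final cleanup. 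Both are sound; given that the paper works in an explicitly cut-free setting, the direct induction is the more self-contained choice.
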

\begin{proof}
By a straightforward induction on the proof-tree.
\end{proof}

We now focus on the polarization of the G\"{o}del-Gentzen's translation, which is lighter than the Kolmogorov's translation, again with the idea of getting a simpler translation in both directions.

\section{A Focused Sequent Calculus} \label{sec:foc}

G\"{o}del-Gentzen negative translation (\mydef{\ref{def:gg}} above)
removes many negations from translations and the polarization we
give in \mysec{\ref{sec:GGpol}} will even more. If we want to follow the pattern of \mythm{\ref{thm:pol-kol}} to show
equiprovability (in the absence of cut), we can {\em no longer}
systematically move formul\ae{} from the right to the left hand sides,
since we lack negation on almost all connectives. Therefore, we must
{\em constrain} our classical sequent calculus to forbid arbitrary
proofs, and in particular to impose that once a rule has been applied on
some formula of the right-hand side, the next rule must apply on the
corresponding subformula of the premiss. Working on the same formula
up to some well-chosen point is a discipline of capital importance, since we avoid to eagerly swap formul\ae{} from right to left.

This is why we introduce a focused version of the classical sequent
calculus. The resulting constraint is that we must decompose the {\em
  stoup} \cite{JYGir93,PLCurHHer00} formula until it gets removed from the {\em
  stoup} position. Only when the stoup becomes empty, can we apply rules on other formul\ae{}.

\begin{definition}[Focused sequent]
A focused sequent is a triple, composed of two multisets of
formul\ae{} and a distinguished set (the {\em stoup}) containing zero
or one formula. It will be noted $\Gamma \vdash A; \Delta$ when the
distinguished set contains a formula $A$, and $\Gamma \vdash .; \Delta$
when it contains no formula.
\end{definition}

The focused sequent calculus we define serves our particular purpose;
for instance it is not optimized to maximize the so-called negative and positive phases
\cite{CLiaDMil09}. Note also that in our paper, negative and positive
has a very different meaning. The calculus is presented in
\myfig{\ref{fig:focus}} and contains a stoup only in the right-hand
side, since this is the only problematic side.

Note that all the left rules require an empty stoup, and that two new right rules, \texttt{focus} and \texttt{release}, respectively place and remove a formula of the right-hand side in the focus.

Only atomic, negated, disjunctive or existentially quantified formul\ae{} can be removed from the stoup:
\begin{itemize}
\item Due to the freshness condition of the $\ex$-left and $\fa$-right rule, the $\ex$-right rule is the only rule that cannot be inverted (or equivalently permuted downwards). Therefore existential statements must be removable from the stoup.

\item The stoup has only one place, so we cannot allow in it both subformul\ae{} of a disjunction. This choice must be done by a subsequent call to the focus rule. More pragmatically, G\"{o}del-Gentzen's translation introduces negations in this case, enabling the storage of the subformul\ae{} on the left-hand side of the sequent. As an informal translation rule, intuitionistic $\neg_R$ rules will correspond to a lost of focus.
\item The same reasoning holds for allowing atomic formul\ae{} to be
  removed from the stoup. Also, if we do not allow this, the system loses completeness since the stoup becomes stuck forever. 
\item Allowing to remove negated formul\ae{} from the stoup
  accounts for the aggressive behavior of the operator $\gen$: to keep the statement of \mythm{\ref{thm:foc-int}} short and close to statements of previous theorems, we must remember that $\gen$ removes the negation of negated formul\ae{}, therefore forcing them to move on the left hand side.
\end{itemize}

As a consequence of the design constraint imposed by our translation, the rule focus cannot act on a
formula which has $\ex$, $\neg$ or $\lor$ as main connective and the
$\ex_R$, $\neg_R$ and $\lor_R$ rules act on formul\ae{} that are not
in the stoup (and, as mentioned, when the stoup itself is
empty). The reasons become clear in the proof of \mythm{\ref{thm:foc-int}}.

Lastly, we impose the formula in the axiom rule to be atomic,
which boils down to an $\eta$-expansion of the usual axiom rule.

To sum up, we consider the connectives $\ex$, $\lor$ and $\neg$, when
they appear on the right-hand side of a sequent, to have a "positive phase" in the sense of \cite{CLiaDMil09} and the other ones to have a negative phase.

\begin{figure}[ht]
  \framebox[12.2cm][c]
{\parbox{12.2cm}
{\scriptsize
  \begin{center}
  \begin{tabular}{c@{\hspace{1cm}}c}
\multicolumn{2}{c}{   \AXCm{~} \RL{ax}
    \UICm{\Gamma, A \vdash . 
    ; A, \Delta} \DP }
\\\\

    \AXCm{\Gamma, A, B \vdash .; \Delta} \RL{$\land_L$}
    \UICm{\Gamma, A \land B \vdash .; \Delta} \DP
    &
    \AXCm{\Gamma \vdash A; \Delta}
    \AXCm{\Gamma \vdash B; \Delta} \RL{$\land_R$}
    \BICm{\Gamma \vdash A \land B; \Delta} \DP
    \\\\

    \AXCm{\Gamma, A \vdash .; \Delta} \AXCm{\Gamma, B \vdash .; \Delta}
    \RL{$\lor_L$}
    \BICm{\Gamma, A \lor B \vdash .; \Delta}\DP
    &
    \AXCm{\Gamma \vdash .; A, B, \Delta} \RL{$\lor_R$}
    \UICm{\Gamma \vdash .; A \lor B, \Delta} \DP
    \\\\

    \AXCm{\Gamma \vdash A; \Delta} \AXCm{\Gamma, B \vdash .; \Delta}
    \RL{$\limp_L$}
    \BICm{\Gamma, A \limp B \vdash .; \Delta} \DP
    &

    \AXCm{\Gamma, A \vdash B; \Delta}
    \RL{$\limp_R$}
    \UICm{\Gamma \vdash A \limp B; \Delta} \DP
    \\\\

    \AXCm{\Gamma \vdash A; \Delta}
    \RL{$\neg_L$}
    \UICm{\Gamma, \neg A \vdash .; \Delta} \DP
    &

    \AXCm{\Gamma, A \vdash .; \Delta}
    \RL{$\neg_R$}
    \UICm{\Gamma \vdash .; \neg A, \Delta} \DP
    \\\\

    \AXCm{\Gamma, A[c/x] \vdash .; \Delta} \RL{ $\ex_L$}
    \UICm{\Gamma, \ex{x} A \vdash .; \Delta} \DP
    &
    \AXCm{\Gamma \vdash .; A[t/x], \Delta} \RL{
         $\ex_R$}
    \UICm{\Gamma \vdash .; \ex{x} A, \Delta} \DP
  \\\\
    \AXCm{\Gamma, A[t/x] \vdash .; \Delta} \RL{$\fa_L$}
    \UICm{\Gamma,  \fa{x} A \vdash .; \Delta} \DP
    &
    \AXCm{\Gamma \vdash A[c/x]; \Delta} \RL{$\fa_R$}
    \UICm{\Gamma \vdash \fa{x} A; \Delta} \DP
    \\\\
    \AXCm{\Gamma, A, A \vdash .; \Delta} \RL{$\contr_L$}
    \UICm{\Gamma, A \vdash .; \Delta} \DP
    &
    \AXCm{\Gamma \vdash .; A, \Delta} \RL{$\contr_R$}
    \UICm{\Gamma \vdash .; A, A, \Delta} \DP
    \\\\
    \AXCm{\Gamma \vdash .; \Delta} \RL{$\weak_L$}
    \UICm{\Gamma, A \vdash .; \Delta} \DP
    &
    \AXCm{\Gamma \vdash .; \Delta} \RL{$\weak_R$}
    \UICm{\Gamma \vdash A; \Delta} \DP
    \\\\
    & 
    \AXCm{\Gamma \vdash A; \Delta} \RL{focus}
    \UICm{\Gamma \vdash .; A, \Delta} \DP
    \\\\
    &
    \AXCm{\Gamma \vdash .; A, \Delta} \RL{release}
    \UICm{\Gamma \vdash A; \Delta} \DP
    \\\\
  \end{tabular}\\
\end{center}
where:
\begin{itemize}
\item the axiom rule involves only atomic formul\ae{},
\item in $\fa_L$ and $\ex_R$, $c$ is a fresh constant, 
\item in $\fa_R$ and $\ex_L$, $t$ is any term, 
\item in release, $A$ is either atomic or of the form $\ex{x} B, B \lor C$ or $\neg B$,
\item in focus, $A$ is neither atomic nor of the form  $\ex{x} B, B \lor
  C$ or $\neg B$.
\end{itemize}

}}
  \caption{Focused classical sequent calculus}
  \label{fig:focus}
\end{figure}

We show that this calculus is equivalent to the usual sequent
calculus of \myfig{\ref{fig:clas}}.

\begin{proposition}
Let $\Gamma, \Delta$ be two multisets of formul\ae{} and $A$ be a
formula. If the sequent $\Gamma \vdash .; \Delta$ (resp. $\Gamma
\vdash A; \Delta$) has a proof in the focused sequent calculus, then
it has a proof in the classical sequent calculus. 
\end{proposition}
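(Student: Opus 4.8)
The plan is to proceed by a routine induction on the height of the focused derivation, after erasing the stoup annotations. Concretely, define a forgetful map $|\cdot|$ on focused sequents by $|\Gamma \vdash A; \Delta| = (\Gamma \vdash A, \Delta)$ and $|\Gamma \vdash .; \Delta| = (\Gamma \vdash \Delta)$, turning any focused sequent into an ordinary two-sided sequent (the stoup formula, if any, is simply merged into the right-hand side). I will prove the stronger statement that whenever $\pi$ is a proof of a focused sequent $S$ in the calculus of \myfig{\ref{fig:focus}}, the map $|\cdot|$ transforms $\pi$ into a proof of $|S|$ in the calculus of \myfig{\ref{fig:clas}}; the proposition is then exactly the two announced instances $S = (\Gamma \vdash .; \Delta)$ and $S = (\Gamma \vdash A; \Delta)$.

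The induction is a case analysis on the last rule $r$ of $\pi$. For \texttt{ax}, $|\Gamma, A \vdash .; A, \Delta| = (\Gamma, A \vdash A, \Delta)$ is an instance of the unrestricted classical axiom, so the atomicity side condition is just dropped. For every logical rule ($\land_L,\land_R,\lor_L,\lor_R,\limp_L,\limp_R,\neg_L,\neg_R,\ex_L,\ex_R,\fa_L,\fa_R$) and for the structural rules $\contr_L,\contr_R,\weak_L,\weak_R$, one checks by inspection that applying $|\cdot|$ to the premise(s) and to the conclusion of $r$ produces exactly a premise/conclusion instance of the rule of the same name in \myfig{\ref{fig:clas}}; in particular the eigenvariable freshness conditions of $\fa_R$ and $\ex_L$, and the ``any term'' condition of $\fa_L$ and $\ex_R$, carry over verbatim, since erasure leaves the contexts untouched. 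Hence one applies the induction hypothesis to each premise and then the homonymous classical rule. The constraint that, in the focused system, all left rules require an empty stoup becomes vacuous once the stoup is erased, so it plays no role here.

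It remains to treat the two book-keeping rules \texttt{focus} and \texttt{release}, and this is the only step deserving a comment — though it is hardly an obstacle. Both rules have the property that $|\cdot|$ identifies their premise with their conclusion: $|\Gamma \vdash A; \Delta| = (\Gamma \vdash A,\Delta) = |\Gamma \vdash .; A, \Delta|$. Therefore, when $r$ is \texttt{focus} or \texttt{release}, the classical proof obtained from the induction hypothesis on the unique premise is \emph{already} a proof of $|S|$, and no classical rule is applied — the derivation simply shrinks. This is the one place where the translation is not a one-for-one rule correspondence; everything else is a mechanical verification that the focused rules are stoup-annotated copies of the classical ones. Consequently there is no genuine difficulty: the focused calculus is, modulo the stoup discipline, a sub-system of the classical calculus augmented by two rules that are inert after erasure.
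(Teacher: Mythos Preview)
Your argument is correct and is exactly the paper's approach: the paper also observes that erasing the stoup (turning the semicolon into a comma) makes every focused rule an instance of the corresponding classical rule, while \texttt{focus} and \texttt{release} become vacuous and are simply deleted. You have merely spelled out in detail what the paper dismisses as ``straightforward.''
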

\begin{proof}
Straightforward by noticing that, forgetting about the stoup
(transforming the semicolon into a comma), all focused rules are
instances of the classical sequent calculus rules. Both rules {\em focus}
and {\em release} lose their meaning and are simply erased from the
proof-tree. \qed{}
\end{proof}

The converse is a corollary of the slightly more general following statement. As we see below, it is crucial to have some degree of freedom to decompose arbitrarily $\Delta'$ into $A$ and $\Delta$ in order to reason properly by induction.

\begin{proposition}\label{prop:clas-focus-equiv}
Let $\Gamma, \Delta'$ be two multisets of formul\ae{}. Assume that the
sequent $\Gamma \vdash \Delta'$ has a proof in the classical sequent
calculus. Let $A$ be a set containing either a formula (also named
$A$ by abuse of notation) or the empty formula, and let $\Delta$ such
that $\Delta' =  A, \Delta$.\\

 Then the sequent $\Gamma \vdash A; \Delta$ has a proof in the focused
 sequent calculus.
\end{proposition}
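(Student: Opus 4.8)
The plan is to proceed by induction on the given classical proof of $\Gamma \vdash \Delta'$, and in the inductive step to case-split on the last rule $r$ applied. For each case I must produce a focused proof of $\Gamma \vdash A; \Delta$ for \emph{every} way of splitting $\Delta' = A,\Delta$ (including the case $A$ empty), which is exactly the extra generality the statement builds in. The guiding idea is: if the active formula of $r$ lies in $\Delta$ (not in the stoup slot $A$), then $r$ is a left rule, or a right rule acting on a side formula; in all these cases the stoup plays no role, so I can apply the induction hypothesis to the premiss(es) with the \emph{same} stoup $A$ and then reapply the corresponding focused rule. (For $\lor_R$, $\neg_R$, $\ex_R$ this works because those rules in the focused calculus act on non-stoup formulas with empty stoup, so I take $A$ empty there; more on this below.) If instead the active formula of $r$ is the stoup formula $A$ itself, I must invert it: apply the focused right rule for $A$'s main connective, which either keeps a subformula in the stoup ($\land_R$, $\limp_R$, $\fa_R$) or, for $A$ atomic / $\ex xB$ / $B\lor C$ / $\neg B$, first does \texttt{release} to move $A$ out of the stoup and then behaves as in the previous case.

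\textbf{Key steps, in order.}
First I would set up the induction and dispose of the base case: if $\Gamma \vdash \Delta'$ is an instance of the classical $\mathrm{ax}$ rule on a formula $C$, I $\eta$-expand — by induction on the structure of $C$, decompose $C$ on both sides using the left and (focused) right structural rules until I reach an atomic axiom $\Gamma', p \vdash .; p, \Delta''$; this handles both $A=\emptyset$ and $A$ a copy of $C$ by using \texttt{focus}/\texttt{release} appropriately. Second, I handle all \emph{left rules} ($\land_L,\lor_L,\limp_L,\neg_L,\ex_L,\fa_L,\contr_L,\weak_L$): these require an empty stoup in the focused calculus, but the stoup $A$ is untouched, so I just apply the IH to each premiss (splitting its right context as $A,\Delta$, noting the premiss right-contexts are the same $\Delta'$ up to the unchanged part) and reapply the focused left rule. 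Third, \emph{right rules whose active formula is in $\Delta$}: for $\land_R$, $\limp_R$, $\fa_R$, $\contr_R$, $\weak_R$ I reapply the focused version with stoup $A$; for $\lor_R$, $\neg_R$, $\ex_R$ — whose focused forms demand empty stoup — if $A$ is nonempty I must first do \texttt{focus} on $A$? no: instead I use that $\Delta' = A,\Delta$ with the active formula in $\Delta$, apply IH with empty stoup to get $\Gamma \vdash .; \Delta'$, then reach $\Gamma \vdash A;\Delta$ by one \texttt{focus} step — legal precisely when $A$ is not atomic, $\ex$, $\lor$, $\neg$; if $A$ \emph{is} one of those, I instead keep it in $\Delta$, i.e. regard the split as having empty stoup, apply the focused $\lor_R/\neg_R/\ex_R$, then recover the stoup on $A$ by \texttt{release} — wait, \texttt{release} goes the other direction. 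The clean move: when $A$ is atomic/$\ex$/$\lor$/$\neg$ and nonempty, get $\Gamma\vdash .;\Delta'= .;A,\Delta$ from the IH (empty stoup) and then do \texttt{release}'s inverse, i.e. simply note $\Gamma\vdash A;\Delta$ follows by \texttt{release} read bottom-up is wrong; rather I apply \texttt{focus}? \texttt{focus} is forbidden on such $A$. So I must get $\Gamma\vdash A;\Delta$ directly: if $A$ is atomic this came from the $\eta$-expanded axiom case or from $\weak_R$; in general I keep $A$ in the context throughout, reaching $\Gamma\vdash .;A,\Delta$, and conclude by the single rule that \emph{does} put such an $A$ in the stoup from the $.$-form — which is \texttt{release}. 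Thus: $\Gamma\vdash .;A,\Delta$ then \texttt{release} gives $\Gamma\vdash A;\Delta$. Good. Fourth, \emph{right rules whose active formula is the stoup $A$}: for $A = B\land C$, apply focused $\land_R$ and recurse on the two premisses (each with stoup $B$, resp. $C$); similarly $A = B\limp C$ (stoup $C$), $A=\fa xB$ (stoup $B[c/x]$); for $A$ atomic use the axiom/$\weak_R$ analysis; for $A = \ex xB$, $B\lor C$, $\neg B$, first apply \texttt{release} to pass to $\Gamma\vdash .; A,\Delta$ and then apply the corresponding focused right rule to the premiss given by IH.

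\textbf{Main obstacle.}
The delicate point — and the reason the proposition is phrased with the free choice of $A$ — is the branching rules $\land_R$ and $\limp_L$ and the right rules that lose focus. When the last classical rule is $\land_R$ with conclusion $\Gamma\vdash B\land C,\Delta''$ and the chosen stoup picks a formula \emph{inside} $\Delta''$ rather than $B\land C$, the two premisses $\Gamma\vdash B,\Delta''$ and $\Gamma\vdash C,\Delta''$ must each be split compatibly — this is fine since $A\subseteq\Delta''$ sits unchanged in both. The real subtlety is the interaction of \texttt{focus}/\texttt{release} with the side condition on which formulas may enter/leave the stoup: I must make sure that whenever I invoke the IH with empty stoup and then restore a stoup, the formula I restore is \emph{allowed} in that position — \texttt{focus} only for non-atomic, non-$\ex$, non-$\lor$, non-$\neg$ formulas, and \texttt{release} (hence reaching the stoup from $.$ for those excluded shapes) only for atomic/$\ex$/$\lor$/$\neg$. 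These two cases are complementary, so every $A$ is covered, but the bookkeeping of which of the two routes applies, for each shape of $A$ and each last rule $r$, is where the argument has to be carried out carefully. I expect no genuine difficulty beyond this casework; the freshness side conditions on $\fa_R$/$\ex_L$ transfer verbatim from the classical proof since the contexts are preserved.
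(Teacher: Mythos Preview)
Your plan has a genuine gap. You correctly note that all left rules of the focused calculus require an \emph{empty} stoup, yet in your ``left rules'' case you write ``the stoup $A$ is untouched, so I just apply the IH to each premiss \ldots\ and reapply the focused left rule.'' These two observations are incompatible: if you apply the IH with $A$ in the stoup you get, e.g., $\Gamma,B,C \vdash A;\Delta$, but there is no focused $\land_L$ with a nonempty stoup. If instead you apply the IH with empty stoup and then the focused left rule, you obtain $\Gamma,B\land C \vdash .;A,\Delta$, and must now place $A$ back in the stoup. When $A$ has main connective $\land$, $\limp$ or $\fa$, neither \texttt{focus} nor \texttt{release} helps: \texttt{release} is forbidden for such $A$, and \texttt{focus} has the wrong orientation (its \emph{conclusion} is the empty-stoup sequent). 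The same problem hits your third case: focused $\land_R$, $\limp_R$, $\fa_R$ act \emph{only} on the stoup formula, so you cannot ``reapply the focused version with stoup $A$'' when the active formula lies in $\Delta$.

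The missing idea is that when $A$ is a $\land$/$\limp$/$\fa$ formula, the only focused rules with conclusion $\Gamma\vdash A;\Delta$ are the right rule on $A$ itself (and $\weak_R$); hence you are \emph{forced} to decompose $A$ first, regardless of what the last classical rule was. The paper achieves this by invoking Kleene's inversion lemmas: since $\land$, $\limp$, $\fa$ are right-invertible, one obtains (strictly shorter, after restricting to atomic axioms and weakenings) classical proofs of the inverted premisses, applies the induction hypothesis with the appropriate subformula in the stoup, and then applies the focused right rule. Accordingly the paper's induction is on proof \emph{height}, not on the proof tree, so that inversion is compatible with the IH. Your structural induction on the last rule cannot accommodate this step.
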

\begin{proof}
The proof is a little bit more involved, but it appeals only to simple
and well-known principles, in particular to Kleene's inversion lemmas \cite{SKle52,OHer10}, stating that inferences rules can be permuted and, therefore, gathered. 

We give only a sketch of the proof, leaving out the details to the reader, for two reasons. Firstly, giving all the lengthy details would not add any insight on the structure of the proof; in the contrary they would blur the visibility of the main ideas. Secondly, similar completeness results are known for much more constrained focused proof systems; see for instance the one presented in \cite{CLiaDMil09}.\\

First of all, we consider a refined version of the classical sequent calculus of \myfig{\ref{fig:clas}} where proofs are restricted to use the axiom and weak rules on atomic formul\ae{}. In this way, we know \cite{OHer10} that Kleene's inversion lemmas \cite{SKle52} make the proof height {\em decrease} strictly. We reason by induction of the {\em height} of this modified proof-tree $\pi$, distinguishing the three following cases: 
\begin{itemize}
\item $A$ is empty, or $A$ contains an atomic, existential,
  disjunctive or negated formula that is not the active formula of the
  last rule $r$ of $\pi$. Then we release $A$,
  focus on the active formula if necessary, apply rule $r$, and we
  get one or two premises, on which we can apply the induction
  hypothesis. Let us give two instances, where, in the second case, $A$
  is empty:
  
  \begin{center}
  \begin{tabular}{ccc}
    \AXCm{\pi}
    \UICm{\Gamma \vdash B, A, \Delta}
    \LeftLabel{$\neg_L$}
    \UICm{\Gamma, \neg B \vdash A, \Delta}
    \DP
    & $\hookrightarrow$ &
    \AXC{IH($\pi$)}
    \UICm{\Gamma \vdash B; A, \Delta}
    \RightLabel{$\neg_L$}
    \UICm{\Gamma, \neg B \vdash .; A, \Delta}
    \RightLabel{release}
    \UICm{\Gamma, \neg B \vdash A; \Delta}
    \DP
    \\ 
    &
    \\
    \AXCm{\pi_1}
    \UICm{\Gamma \vdash B, \Delta}
    \AXCm{\pi_2}
    \UICm{\Gamma \vdash C, \Delta}
    \LeftLabel{$\land_R$}
    \BICm{\Gamma \vdash B \land C, \Delta}
    \DP
    & $~~\hookrightarrow~~$ &
    \AXC{IH($\pi_1$)}
    \UICm{\Gamma \vdash B; \Delta}
    \AXC{IH($\pi_2$)}
    \UICm{\Gamma \vdash C; \Delta}
    \RightLabel{$\land_R$}
    \LeftLabel{~~}
    \BICm{\Gamma \vdash B \land C; \Delta}
    \RightLabel{focus}
    \UICm{\Gamma \vdash .; B \land C, \Delta}
    \DP
  \end{tabular}
\end{center}

\item $A$ contains an atomic, existential, disjunctive or negated
  formula that is active in the last rule $r$ of $\pi$. Then $r$ must
  be one of the six rules axiom, $\ex_R$, $\lor_R$,
  $\neg_R$, weak${}_R$ or contr${}_R$. They are direct and all the
remaining cases follow a similar pattern. Here is the case for the $\ex_R$ rule:

\begin{center}
  \begin{tabular}{ccc}
    \AXCm{\pi}
    \UICm{\Gamma \vdash B[t/x], \Delta}
    \LeftLabel{$\ex_R$}
    \UICm{\Gamma \vdash \ex{x} B, \Delta}
    \DP
    & $~~\hookrightarrow~~$ &
    \AXC{IH($\pi$)}
    \UICm{\Gamma \vdash . ; B[t/x], \Delta}
    \RightLabel{$\ex_R$}
    \UICm{\Gamma \vdash .; \ex{x} B, \Delta}
    \RightLabel{release}
    \UICm{\Gamma \vdash  \ex{x} B ; \Delta}
    \DP
    \\ 
  \end{tabular}
\end{center}

\item If $A$ is not empty and not an atomic, existential, disjunctive
  or negated formula then, disregarding the last rule of
  $\pi$, we apply Kleene's inversion lemma on $A$, the induction
  hypothesis on the premises, since the proof height has decreased,
  and recompose those premises to get back the corresponding component(s) of $A$ in the stoup. Here is an example of such a rule : 

\begin{center}
  \begin{tabular}{ccc}
    \AXCm{\pi}
    \UICm{\Gamma, B \vdash C, \Delta}
    \LeftLabel{$\limp_R$}
    \UICm{\Gamma \vdash B \limp C, \Delta}
    \DP
    & $\hookrightarrow$ &
    \AXC{IH($\pi$)}
    \UICm{\Gamma, B \vdash C; \Delta}
    \RightLabel{$\limp_R$}
    \UICm{\Gamma \vdash B \limp C; \Delta}
    \DP
    \\ 
\end{tabular}
\end{center}

\end{itemize}
\qed{}
\end{proof}

\section{Polarizing G\"odel-Gentzen's Translation} \label{sec:GGpol}

We try to reduce the number of negations. We use the polarization 
of propositions (\mydef{\ref{def:polarization}} above) and replace 
disjunction and existential quantifiers by conjunction and universal quantifiers, as in G\"odel-Gentzen's translation.

\begin{definition} \label{def:pol-god}
Let A and B be propositions. We define, by induction on the structure 
of propositions, the positive(${}^p\!\!$) and negative(${}^n\!\!$) translations: 

\begin{tabular}{ rclrcl}
      $A^p $&$\equiv$&$ A$ \textit{if A is atomic}&
          $ A^n  $&$\equiv$&$ \neg\neg A$ \textit{if A is atomic}\\
     $(A \wedge B)^p $&$\equiv$&$ A^p \wedge B^p$&
         $(A \wedge B)^n $&$\equiv$&$   A^n \wedge B^n$\\
     $(A \vee B)^p $&$\equiv$&$  A^p \vee  B^p$&
        $(A \vee B)^n $&$\equiv$&$  \neg(\neg A^n \wedge \neg B^n)$\\
    $(A \Rightarrow B)^p $&$\equiv$&$ A^n \Rightarrow B^p$&
	$(A \Rightarrow B)^n $&$\equiv$&$ A^p \Rightarrow  B^n$ \\
    $(\neg A)^p  $&$\equiv$&$ \neg A^n$&
	$(\neg A)^n  $&$\equiv$&$\neg A^p$\\
     $(\forall x A)^p  $&$\equiv$&$ \forall x A^p$&
	$(\forall x A)^n  $&$\equiv$&$ \forall x  A^n$\\
    $(\exists x A)^p  $&$\equiv$&$ \exists x  A^p$&
	$(\exists x A)^n  $&$\equiv$&$ \neg\forall x \neg A^n$\\
\end{tabular}
\end{definition}

\begin{theorem}\label{thm:foc-int}
Let $\Gamma, \Delta$ be multisets of formul\ae{}, and $A$ be a set
containing zero or one formula. If the sequent $\Gamma \vdash A;
\Delta$ has a proof in the (classical) focused sequent calculus, then,
in the intuitionistic sequent calculus, the sequent $\gpos{\Gamma},
\gen \gneg{\Delta} \vdash \gneg{A}$ has a proof.

\end{theorem}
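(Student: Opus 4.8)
The plan is to argue by induction on the height of the given focused derivation $\pi$ of $\Gamma \vdash A; \Delta$, with a case analysis on its last rule $r$; here $\gpos{\Gamma}$ and $\gneg{\Delta}$ are obtained by applying, respectively, the positive and negative translation of \mydef{\ref{def:pol-god}} to each formula, and $\gen\gneg{\Delta}$ by then applying the operator of \mydef{\ref{def:antinegation}} to each formula of $\gneg{\Delta}$. Before the case analysis I would record two preliminaries. First, both translations commute with substitution, $(F[t/x])^p = F^p[t/x]$ and $(F[t/x])^n = F^n[t/x]$, and neither changes the constants of a formula, so the eigenvariable conditions of $\fa_R$ and $\ex_L$ survive translation. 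Second, a small lemma on $\gen$: for every $C$ and context $\Sigma$, if $\Sigma, \gen C \vdash$ is intuitionistically provable then so is $\Sigma, \neg C \vdash$ --- trivially when $C$ is not a negation (then $\gen C = \neg C$), and when $C = \neg C'$ by passing from $\Sigma, C' \vdash$ to $\Sigma, \neg\neg C' \vdash$ with one $\neg_R$ and one $\neg_L$. I would also note the complementary fact that $F^n$ is a negation exactly when $F$ is atomic, a disjunction, an existential or a negation --- precisely the formul\ae{} the focused calculus allows to be released --- and that then $F^n = \neg(\gen F^n)$, whereas for the other (``negative-phase'') $F$ one has $\gen F^n = \neg F^n$.

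The first, transparent, batch of cases is the left rules $\land_L, \lor_L, \limp_L, \neg_L, \ex_L, \fa_L, \contr_L, \weak_L$ and the stoup-decomposing right rules $\land_R, \limp_R, \fa_R$, plus $\contr_R$ and $\weak_R$. Since $(\cdot)^p$ commutes with $\land, \lor, \fa, \ex$ and sends $\neg F$ to $\neg F^n$ and $F \limp G$ to $F^n \limp G^p$, while $(\cdot)^n$ commutes with $\land, \fa$ and sends $F \limp G$ to $F^p \limp G^n$, for each of these rules applying the homonymous intuitionistic rule to the translated premiss(es) given by the induction hypothesis produces exactly the translated conclusion ($\contr_R$ becoming $\contr_L$, $\weak_R$ staying $\weak_R$), the eigenvariable conditions being handled by the first preliminary. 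The one left rule that needs a small flourish is the atomic axiom $\Gamma, P \vdash .; P, \Delta$: its translation is $\gpos{\Gamma}, P, \neg P, \gen\gneg{\Delta} \vdash$ (as $P^p = P$ and $\gen P^n = \neg P$), which one gets from the axiom $\gpos{\Gamma}, \gen\gneg{\Delta}, P \vdash P$ by a single $\neg_L$.

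The core of the proof is the ``positive-phase'' right rules together with \texttt{focus} and \texttt{release}. For \texttt{release} on $F$ (atomic, $\ex x B$, $B \lor C$ or $\neg B$) the induction hypothesis yields $\gpos{\Gamma}, \gen F^n, \gen\gneg{\Delta} \vdash$, and since $F^n = \neg(\gen F^n)$ one $\neg_R$ delivers the translated conclusion $\gpos{\Gamma}, \gen\gneg{\Delta} \vdash F^n$; so \texttt{release} corresponds to $\neg_R$. Dually \texttt{focus}, which may act only on a conjunction, implication or universal, satisfies $\gen F^n = \neg F^n$, so one $\neg_L$ takes the translated premiss $\gpos{\Gamma}, \gen\gneg{\Delta} \vdash F^n$ to the translated conclusion $\gpos{\Gamma}, \neg F^n, \gen\gneg{\Delta} \vdash$; so \texttt{focus} corresponds to $\neg_L$. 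The right rule $\neg_R$ on $\neg F \in \Delta$ needs nothing: $(\neg F)^n = \neg F^p$, so $\gen$ applied to it is $F^p$, and the translated premiss $\gpos{\Gamma}, F^p, \gen\gneg{\Delta} \vdash$ already is the translated conclusion. For $\lor_R$ on $B \lor C \in \Delta$, the premiss translates to $\gpos{\Gamma}, \gen B^n, \gen C^n, \gen\gneg{\Delta} \vdash$ and the conclusion to $\gpos{\Gamma}, (\neg B^n \land \neg C^n), \gen\gneg{\Delta} \vdash$ (because $(B \lor C)^n = \neg(\neg B^n \land \neg C^n)$, whose $\gen$ removes the leading $\neg$), so I would apply the $\gen$-lemma twice and then $\land_L$. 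For $\ex_R$ on $\ex x B \in \Delta$ with witness $t$, the premiss translates to $\gpos{\Gamma}, \gen(B^n[t/x]), \gen\gneg{\Delta} \vdash$ and, since $(\ex x B)^n = \neg \fa x \neg B^n$, the conclusion to $\gpos{\Gamma}, \fa x \neg B^n, \gen\gneg{\Delta} \vdash$; one $\fa_L$ at $t$ reduces this to $\gpos{\Gamma}, \neg B^n[t/x], \gen\gneg{\Delta} \vdash$, obtained from the premiss by the $\gen$-lemma.

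The step I expect to be the main obstacle is not any single computation but the conceptual point that the focusing side-conditions are calibrated exactly to make the polarity bookkeeping close: \texttt{release}, $\neg_R$, $\lor_R$, $\ex_R$ are allowed only where the negative translation is a negation --- which is what lets $\neg_R$, respectively the $\gen$-lemma, recover the target sequent --- and \texttt{focus} is barred from exactly those cases, so that when it fires one is in the regime $\gen F^n = \neg F^n$ where $\neg_L$ works. Relaxing the discipline would force a double-negation elimination that intuitionistic logic does not admit, which is precisely why the detour through the focused calculus of \mysec{\ref{sec:foc}} is needed. Apart from that, the only step that is a genuine (if one-line) proof transformation rather than a single rule application is the $\gen$-lemma, and the quantifier cases are where one must remember that the translations commute with substitution.
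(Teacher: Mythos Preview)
Your proof is correct and follows essentially the same route as the paper's: an induction on the focused derivation with the same case partition, the same one-rule translations for \texttt{release}/\texttt{focus}/axiom, the same ``nothing to do'' treatment of $\neg_R$, and the same handling of $\lor_R$ and $\ex_R$ via the passage from $\gen B^n$ to $\neg B^n$ followed by $\land_L$ or $\fa_L$. The only difference is presentational: you factor the two-subcase argument the paper performs inline (in the $\lor_R$ and $\ex_R$ cases) into your ``$\gen$-lemma'', and you make explicit the substitution-commutation and eigenvariable remarks that the paper leaves tacit.
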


Notice that $\gen$ removes the trailing negation of $\Delta$ in three
cases: the negative translations of $\ex$, of $\lor$ and of $\neg$
(this last one more as a side-effect).

\begin{proof}
By induction on the proof of $\Gamma \vdash A; \Delta$, considering
one by one the 19 cases from \myfig{\ref{fig:focus}}:
\begin{itemize}
\item A left rule. We apply the induction hypothesis on the premises and
  copy the left rule. For instance, if the rule is $\limp_L$, then the
  induction hypothesis gives us proofs of the two sequents
  $\gpos{\Gamma}, \gen \gneg{\Delta} \vdash \gneg{A}$ (since $A$ is put in the stoup in the $\limp_L$ rule) and
  $\gpos{\Gamma}, \gpos{B}, \gen \gneg{\Delta} \vdash$ that can be
  readily combined with the (intuitionistic) $\limp_L$ rule to yield a
  proof of the sequent $\gpos{\Gamma}, \gneg{A} \limp \gpos{B}, \gen
  \gneg{\Delta} \vdash$. This is what we were looking for, since
  $\gpos{(A \limp B)} \equiv \gneg{A} \limp \gpos{B}$.

\item A contr${}_R$ rule. It is transformed (after application of the
  induction hypothesis) into a contr${}_L$ rule.

\item A weak${}_R$ rule. It is transformed into a weak${}_R$ rule.

\item A release rule. This can occur only if $A$ is atomic or of the
  form $\ex{x} B$, $B \lor C$ or $\neg B$. In all cases, we translate
  it as a $\neg_R$ rule, which removes the trailing negation of
  $\gneg{A}$, turning it into the formula $\gen \gneg{A}$ (see \mydef{\ref{def:antinegation}}), that integrates directly $\gen \gneg{\Delta}$, so that we can readily plug the
  proof obtained by the application of the induction hypothesis.

\item A focus rule on $A \in \Delta$. This can occur only if $A$ is
  neither atomic nor of the form $\ex{x} B$, $B \lor C$ or $\neg
  B$. Therefore, $\gen \gneg{A} = \neg \gneg{A}$, and we apply a
  $\neg_L$ rule.

\item An axiom rule. Since $A$ is restricted to be atomic, we
  need to build an intuitionistic proof of the sequent $\gpos{\Gamma},
  A, \gen \neg \neg{A}, \gen \gneg{\Delta}\vdash $ which is a trivial two-step proof since $\gen \neg \neg {A}$ is $\neg A$.

\item A $\neg_R$ rule:
\begin{prooftree}
\AXCm{\pi}
\UICm{\Gamma, B \vdash .; \Delta}
\UICm{\Gamma \vdash .; \neg B, \Delta}
\end{prooftree}

We must find a proof of the sequent $\gpos{\Gamma}, \gen \neg
\gpos{B}, \gen \gneg{\Delta} \vdash$. But $\gen \neg
\gpos{B}$ {\em is the same as} $\gpos{B}$, and the induction hypothesis
gives us directly a proof of $\gpos{\Gamma}, \gpos{B}, \gen \gneg{\Delta}
\vdash$. In other words, $\neg_R$ is not translated, thanks to the
operator $\gen$ (that will soon lead us into minor considerations).

\item A $\lor_R$ rule:
\begin{prooftree}
\AXCm{\pi_1}
\UICm{\Gamma \vdash .; B, C, \Delta}
\UICm{\Gamma \vdash .; B \lor C, \Delta}
\end{prooftree}

We must build a proof of the sequent $\gpos{\Gamma}, \gen \neg
(\neg \gneg{B} \land \neg \gneg{C}), \gen \gneg{\Delta} \vdash$, which
is equal to $\gpos{\Gamma}, \neg \gneg{B} \land \neg \gneg{C}, \gen \gneg{\Delta} \vdash$. It is natural to try to apply the $\land_L$ rule:

\begin{prooftree}
   \AXCm{\gpos{\Gamma}, \neg \gneg{B}, \neg \gneg{C}, \gen \gneg{\Delta} 
  	\vdash}
  \RightLabel{$\land_L$}
  \UICm{\gpos{\Gamma}, \neg \gneg{B} \land \neg \gneg{C}, \gen \gneg{\Delta} 
  	\vdash}
\end{prooftree}

We are committed to find a proof of the premiss, while the induction hypothesis gives us a proof of the following slightly different sequent:

$$
\gpos{\Gamma}, \gen \gneg{B}, \gen \gneg{C}, \gen \gneg{\Delta} \vdash
$$ 

Therefore we must examine two subcases:

\begin{itemize}
  \item $B$ is an atom, an existential, disjunctive or negated
    formula. Then $\gneg{B} = \neg D$ for some $D$, and $\gen \gneg{B}
    = D$. We build the following proof, given the proof obtained by
    application of the induction hypothesis:
    \begin{prooftree}
      \AXCm{\gpos{\Gamma}, D, \gen \gneg{C}, \gen \gneg{\Delta} \vdash}
      \RightLabel{$\neg_R$}
      \UICm{\gpos{\Gamma}, \gen \gneg{C}, \gen \gneg{\Delta} \vdash \neg D}
      \RightLabel{$\neg_L$}
      \UICm{\gpos{\Gamma}, \neg \gneg{B}, \gen \gneg{C}, \gen \gneg{\Delta} \vdash}
    \end{prooftree}
  \item otherwise $\gen \gneg{B} = \neg \gneg{B}$, and the induction hypothesis gives us directly a proof of the above sequent.
\end{itemize}
We do a similar case distinction on $C$ to get from the previous proof a proof of the sequent $\gpos{\Gamma}, \neg \gneg{B}, \neg \gneg{C}, \neg \gneg{\Delta} \vdash$, which is now exactly what we were looking for.

\item A $\ex_R$ rule:
  \begin{prooftree}
    \AXCm{\pi}
    \UICm{\Gamma \vdash .; A[t/x], \Delta}
    \UICm{\Gamma \vdash .; \ex{x} A, \Delta}
  \end{prooftree}

The induction hypothesis gives us a proof of the sequent $\gpos{\Gamma},
\gen \gneg{A[t/x]}, \gen \gneg{\Delta} \vdash$, that we turn, in the
same way as in the previous case, into a proof of the sequent
$\gpos{\Gamma}, \neg \gneg{A[t/x]}, \gen \gneg{\Delta} \vdash$, to which we apply the $\fa_L$ rule:
\begin{prooftree}
\AXCm{\gpos{\Gamma}, \neg \gneg{A[t/x]}, \gen \gneg{\Delta} \vdash}
\LeftLabel{$\fa_L$}
\UICm{\gpos{\Gamma}, \fa{x} \neg \gneg{A}, \gen \gneg{\Delta} \vdash}
\end{prooftree}

the end sequent is also equal to $\gpos{\Gamma}, \gen \gneg{\ex{x}
  A}, \gen \gneg{\Delta} \vdash$; so we have exhibited the proof we
were looking for.

\item A $\land_R$, $\limp_R$ or $\fa_R$ rule. Those three last cases are easy, since we are in the stoup, which corresponds to the right-hand side of the (intuitionistic) sequent. For example, let us consider the case of the $\limp_R$ rule:

\begin{center}
  \begin{tabular}{ccc}
    \AXCm{\phantom{I(}\pi\phantom{)I}}
    \UICm{\Gamma, A \vdash B; \Delta}
    \LeftLabel{$\limp_R$}
    \UICm{\Gamma \vdash A \limp B; \Delta}
    \DP
    & $~~\hookrightarrow~~$ &
    \AXC{IH($\pi$)}
    \UICm{\gpos{\Gamma}, \gpos{A}, \gen \gneg{\Delta} \vdash \gneg{B}}
    \RightLabel{$\limp_R$}
    \UICm{\gpos{\Gamma}, \gen \gneg{\Delta} \vdash \gpos{A} \limp \gneg{B}}
    \DP
    \\ 
\end{tabular}
\end{center}
\end{itemize}
\qed{}
\end{proof}

The reverse translation is expressed with respect to the unfocused sequent calculus, which is more liberal and therefore more convenient for the reverse way. We nevertheless need to slightly generalize the statement.

\begin{theorem}\label{thm:foc-inv}
Let $\Gamma, \Delta_1, \Delta_2$ be multisets of formul\ae{}, such that $\Delta_1$ does not contain any negated formula. Let $D$ be at most one formula.
If the sequent $\Gamma^p, \gen \Delta_1^n, \neg \Delta_2^n \vdash  D^n$ is provable in the intuitionistic sequent calculus, then $\Gamma \vdash \Delta_1, \Delta_2, D$ is provable in the classical sequent calculus.
\end{theorem}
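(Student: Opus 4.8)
The plan is to prove \mythm{\ref{thm:foc-inv}} by induction on the height of the intuitionistic proof $\rho$ of $\Gamma^p, \gen \Delta_1^n, \neg \Delta_2^n \vdash D^n$, analyzing the last rule applied. The overall strategy mirrors the reverse translation of \mythm{\ref{thm:pol-kol}}: every intuitionistic rule has to be simulated by one (or a few) classical rules acting on the corresponding un-translated formul\ae{} in $\Gamma \vdash \Delta_1, \Delta_2, D$. The role of the hypothesis ``$\Delta_1$ contains no negated formula'' is to control the behaviour of $\gen$: for a non-negated formula $A$ we have $\gen A^n = \neg A^n$, so the elements of $\gen\Delta_1^n$ are genuinely negated formul\ae{} on the left, exactly as the elements of $\neg\Delta_2^n$; this uniformity is what makes the case analysis tractable. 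The degree of freedom in splitting the right-hand side into $\Delta_1$, $\Delta_2$ and $D$ is, as the remark before the statement says, what lets the induction go through: when an intuitionistic rule decomposes a formula, the resulting subformul\ae{} may need to migrate between these three zones.

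First I would dispatch the structural and trivial cases: an axiom $\Gamma^p, A \vdash A$ (recall that by $\eta$-expansion we may assume axioms are atomic, and $A^n = \neg\neg A$, $A^p = A$ for atomic $A$) is matched by a classical axiom $\Gamma, A \vdash A, \ldots$ possibly after a $\neg_L$ on the translated side; weakening and contraction on the left translate to themselves, and a left contraction coming from a $\gen\Delta_1^n$/$\neg\Delta_2^n$ formula corresponds classically to a right contraction. Then I would treat the right rules. The key observation is that $D^n$ is built by the negative translation, so its principal connective is always ``negative-friendly'': $(\fa x A)^n = \fa x A^n$, $(A\land B)^n = A^n\land B^n$, $(A\limp B)^n = A^p \limp B^n$, $(\neg A)^n = \neg A^p$, while $(\lor)^n$ and $(\ex)^n$ and the atomic case all produce an outermost negation. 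So an intuitionistic $\fa_R$, $\land_R$ or $\limp_R$ ending $\rho$ is simulated directly by the classical $\fa_R$, $\land_R$, $\limp_R$ — in the $\limp_R$ case the premiss is $\Gamma^p, A^p, \ldots \vdash B^n$, and since $A^p = A^p$ is the positive translation of $A$, adding $A$ to $\Gamma$ keeps us in the shape of the theorem (with $\Gamma$ replaced by $\Gamma,A$). When the last rule is $\neg_R$ with $D^n = \neg A^p$ and premiss $\Gamma^p, \ldots, A^p \vdash$, we again just move $A$ into $\Gamma$ and close with the classical $\neg_R$; when $D^n$ itself is a negated formula arising from $D = \ex x A$, $D = B\lor C$ or $D$ atomic, the intuitionistic $\neg_R$ producing it has premiss $\Gamma^p, \ldots, \gneg{D}{}' \vdash$ (where $\gneg{D}{}'$ is the body under the negation, possibly further analysed) and this is handled by folding $D$'s body into the $\Delta_2$ zone and invoking the induction hypothesis, then rebuilding $D$ classically with $\lor_R$, $\ex_R$ or nothing.

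The substantive cases are the left rules acting on a formula of $\gen\Delta_1^n$ or $\neg\Delta_2^n$, i.e. on a negated formula $\neg E$ where $E$ is $A^n$ (if the source formula $A\in\Delta_2$) or, after a $\neg_L$, the body of $A^n$. An intuitionistic $\neg_L$ on such a formula has premiss $\Gamma^p, (\text{rest}) \vdash E$, so we recursively need a classical proof of $\Gamma \vdash (\text{rest-sources}), A$ where $E = A^n$ — which is precisely the induction hypothesis with $D := A$ (note $D$ going from ``at most one formula'' to one formula, and $\Delta_2$ shrinking), after which no classical rule is needed since $A$ was already in the right-hand multiset. More delicate are the cases where the negative translation introduced a $\land$ under a negation, namely $(B\lor C)^n = \neg(\neg B^n \land \neg C^n)$ and $(\ex x B)^n = \neg\fa x\neg B^n$: here an intuitionistic $\land_L$ (resp. $\fa_L$), possibly preceded by $\neg_L$, peels the structure apart, and classically this is simulated by $\lor_L$ / $\lor_R$ or by $\ex_L$ / $\ex_R$ respectively, rearranging the $\neg B^n$, $\neg C^n$ pieces into the $\Delta_2$ zone (they are negated, so they legitimately belong there, never in $\Delta_1$) and applying the induction hypothesis. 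I expect this bookkeeping — tracking which subformul\ae{} land in $\Delta_1$ versus $\Delta_2$, and checking that the ``$\Delta_1$ has no negated formula'' invariant is preserved in every recursive call (in particular that newly exposed negated formul\ae{} always go to $\Delta_2$) — to be the main obstacle; everything else is a routine rule-by-rule simulation. A final subtlety worth flagging: because the source calculus is intuitionistic its right-hand side has at most one formula, so whenever a left rule is applied the conclusion's right side is $D^n$ (or empty), and one must verify the simulating classical derivation indeed ends with the full multiset $\Delta_1,\Delta_2,D$ on the right — this is exactly where the liberality of the unfocused classical calculus (weakening, contraction, multiple right formul\ae{}) is used, as the theorem statement anticipates.
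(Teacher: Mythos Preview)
Your overall plan --- induction on the intuitionistic derivation, simulating each rule classically --- is the paper's plan, but there is a real gap in the atomic case that your proposal does not close. Your claim that ``for a non-negated formula $A$ we have $\gen A^n = \neg A^n$'' is false: for atomic $A$ one has $A^n = \neg\neg A$ and hence $\gen A^n = \neg A$, not $\neg\neg\neg A$; and for $A = B\lor C$ or $A = \exists x B$ the formula $\gen A^n$ is not negated at all (it is the conjunction $\neg B^n\land\neg C^n$, respectively the universal $\forall x\,\neg B^n$). So the ``uniformity'' you invoke between $\gen\Delta_1^n$ and $\neg\Delta_2^n$ simply does not hold. This is not merely cosmetic: when $D_1\in\Delta_1$ is atomic, $\gen D_1^n = \neg D_1$, and an intuitionistic $\neg_L$ on it produces a premiss with the \emph{bare atom} $D_1$ on the right. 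That premiss has no shape $\ldots \vdash E^n$ for any $E$ (the negative translation never yields a bare atom), so your induction hypothesis does not apply. Your axiom case has the same defect: you write ``an axiom $\Gamma^p, A \vdash A$'' with $A$ atomic, but the right-hand side $A$ is not of the form $D^n$, so this sequent is not an instance of the statement being proved.

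The paper closes this gap with a preprocessing step you omit: before the induction, every $\neg_L$ rule whose active formula is an atom is permuted upward (Kleene inversion) until it is glued to an axiom or a weakening; the resulting two-rule block has empty right-hand side in its conclusion and is translated at once as a classical axiom (resp.\ weakening), so the problematic bare atom on the right never has to be matched against the induction hypothesis. Without this device (or an equivalent one) the induction breaks exactly at atoms. A secondary inaccuracy: in the $\neg_R$ case with $D$ atomic, disjunctive, or existential, the body under the outer negation of $D^n$ is precisely $\gen D^n$, so $D$ is moved into $\Delta_1$ (it is not a negated formula), not into $\Delta_2$; the induction hypothesis then already gives $D$ on the classical right-hand side, and no classical $\lor_R$ or $\exists_R$ is needed at that step. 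The real role of the hypothesis ``$\Delta_1$ contains no negated formula'' is not the uniformity you describe, but simply to exclude the case $D_1 = \neg C$, for which $\gen D_1^n = C^p$ would be indistinguishable from a member of $\Gamma^p$.
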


\begin{proof}
We constraint the intuitionistic proof to have a certain shape before starting the induction. First, we assume that axiom rules are restricted to atoms. It is always possible to expanse the axioms that are not of this form. Second, we also assume that $\neg_L$ rules on atomic forml\ae{} are permuted upwards as far as they can \cite{SKle52}, this basically induces that this $\neg_L$ rule becomes glued either to an axiom or to a weakening rule and therefore the axiom case will be integrated to $\neg_L$ case. This way, we avoid the presence of non-double negated axioms on the right-hand side. Unless stated otherwise we do not mention explicitly the application of the induction hypothesis, which is clear from the context.
\begin{itemize}
\item A contraction or a weakening on any of the formul\ae{} of $\Gamma^p, \gen \Delta_1^n, \neg \Delta_2^n$ or $D^n$ is turned into the same rule on the corresponding formula of $\Gamma, \Delta_1, \Delta_2$ or $D$. Below, we now concentrate on connective and quantifier rules.

\item A left-rule on $\Gamma^p$ is turned into the same left-rule on $\Gamma$. The potential erasing of $D^n$ in the two cases $\limp_L$ and $\neg_L$ is handled through a weakening.

\item A left-rule on $\neg \Delta_2^n$ can be only a $\neg_L$ rule. It is turned into a weakening on $D^n$ if necessary, since we apply the induction hypothesis on the premiss $\Gamma^p, \gen \Delta_1^n, \neg {(\Delta'_2)}^n \vdash D_2^n$, with $\Delta_2 = \Delta'_2, D_2$.

\item A right-rule on $D^n$, assuming the main connective or quantifier of $D$ is $\land$, $\limp$ or $\fa$. The rule is $\land_R$, $\limp_R$ or $\fa_R$, respectively. It is turned into the same right-rule on $D$.

\item A right-rule on $D^n$, assuming $D$ is an atomic, existentially quantified or disjunctive formula. The rule is $\neg_R$ and the premiss is $\Gamma^p, \gen {\Delta_1}^n, \gen D^n, \neg \Delta_2^n \vdash$, to which we only need to apply the induction hypothesis.

\item A right-rule on $D^n$ assuming $D$ is a negated formula $\neg D'$. In this case, the rule must be $\neg_R$ and the premiss is of the form $\Gamma^p, \gen {\Delta_1}^n, \gen (D)^n, \neg \Delta_2^n \vdash$. But $D$ is negated, and $\gen D^n = \gen \neg D'^p = D'^p$. So, to apply the induction hypothesis, we consider that the premiss is $\Gamma^p, D'^p, \gen {\Delta_1}^n, \neg \Delta_2^n \vdash$.

\item All possibilities for $D$ have been examined. Notice in particular that a $\lor_R$ or $\ex_R$ rule cannot be applied on $D^n$, since the negative translation of \mydef{\ref{def:pol-god}} never introduces this connective (resp. quantifier) in head position.

\item A left-rule on $\gen D_1^n \in \gen \Delta_1^n$, assuming $D_1$ is a disjunctive (resp. existentially quantified) formula. The rule is $\land_L$ (resp. $\fa_L$) and is turned in an $\lor_R$ (resp. $\ex_R$) rule on $D_1$, making the active formula(e) of the premiss(es) move from $\Delta_1$ to $\Delta_2$ if necessary. Let us detail the $\lor$ case. $D_1 = B_1 \lor C_1$, and $\gen D_1^n = \neg B_1^n \land \neg C_1^n$. The $\land_L$ rule gives us the premiss :
$$
\Gamma_1^p, \gen (\Delta'_1)^n, \neg B_1^n, \neg C_1^n, \neg \Delta_2^n \vdash D^n
$$
We must distinguish according to the shapes of $B_1$ and $C_1$. Let us discuss only $B_1$, the discussion on $C_1$ being exactly the same:
	\begin{itemize}
    	\item $B_1$ is an atomic, existentially quantified, disjunctive or negated formula. Then $\neg B_1^n$ is different from $\gen B_1^n$, and to apply properly the induction hypothesis, $B_1$ must be placed into $\Delta_2$.
        \item Otherwise $\neg B_1^n = \gen B_1^n$ and it is left into $\Delta_1$.
    \end{itemize}

\item A left-rule on $\gen D_1^n \in \gen \Delta_1^n$, assuming that the main connective or quantifier of $D_1$ is $\limp$, $\land$ or $\fa$. The rule is $\neg_L$ in all cases, and we only need to weaken on $D^n$ if necessary, before applying the induction hypothesis on the premises. 

\item A left rule on $\gen D_1^n \in \gen \Delta_1^n$, assuming $D_1$ is atomic. By assumption, the next rule is a rule on $D_1$. If it is a weakening, we translate both rules at once by weakening on $D_1$ and apply the induction hypothesis to the premiss of the weakening rule. Otherwise the next rule is an axiom. The only possibility is that $D_1$ belongs to $\Gamma^p$, and we translate both rules as an axiom.

\item By assumption, $D_1$ cannot be a negated formula, and therefore all the cases have been considered. \qed{}
\end{itemize}
\end{proof}

\begin{corollary}
Let $\Gamma, \Delta$ be multisets of formul\ae{} and $D$ be at most one formula. If the sequent $\Gamma^p, \gen \Delta^n \vdash D^n$ is provable in the intuitionistic sequent calculus, then the sequent $\Gamma \vdash \Delta, D$ is provable in the classical sequent calculus.
\end{corollary}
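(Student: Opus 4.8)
The plan is to obtain this corollary as a direct specialization of \mythm{\ref{thm:foc-inv}}, the only work being to choose the right decomposition of $\Delta$ and to clean up afterwards with a couple of structural rules. First I would split $\Delta$ into its negated and its non\-negated part: write $\Delta = \Delta', \neg B_1, \dots, \neg B_k$, where no formula of $\Delta'$ has $\neg$ as its principal connective. By construction, $\Delta'$ meets the side condition imposed on $\Delta_1$ in \mythm{\ref{thm:foc-inv}}.

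The crux is to notice that the hypothesis sequent of the corollary is already, verbatim, an instance of the hypothesis of \mythm{\ref{thm:foc-inv}} once the $B_i$ are absorbed into the antecedent. Indeed, by \mydef{\ref{def:pol-god}} we have $(\neg B_i)^n \equiv \neg B_i^p$, and by \mydef{\ref{def:antinegation}} we have $\gen(\neg B_i^p) \equiv B_i^p$. Hence
\[
  \Gamma^p, \gen \Delta^n \;=\; \Gamma^p, B_1^p, \dots, B_k^p, \gen (\Delta')^n \;=\; (\Gamma, B_1, \dots, B_k)^p, \gen (\Delta')^n .
\]
So the assumed intuitionistic proof of $\Gamma^p, \gen \Delta^n \vdash D^n$ is precisely a proof of $(\Gamma, B_1, \dots, B_k)^p, \gen (\Delta')^n \vdash D^n$, which is the hypothesis of \mythm{\ref{thm:foc-inv}} instantiated with context $\Gamma, B_1, \dots, B_k$, with $\Delta_1 := \Delta'$, with $\Delta_2$ empty (so that $\neg \Delta_2^n$ contributes nothing), and with the same $D$.

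Applying \mythm{\ref{thm:foc-inv}} then yields a classical proof of $\Gamma, B_1, \dots, B_k \vdash \Delta', D$. To finish, I would move the $B_i$ back to the right\-hand side by $k$ successive applications of the classical $\neg_R$ rule, obtaining $\Gamma \vdash \Delta', \neg B_1, \dots, \neg B_k, D$, which is exactly $\Gamma \vdash \Delta, D$.

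I do not expect a genuine obstacle: the argument is an instantiation of \mythm{\ref{thm:foc-inv}} followed by a handful of $\neg_R$ steps. The only points that need a little care are the bookkeeping with the $\gen$ operator --- checking that $\gen(\neg B)^n$ is \emph{literally} $B^p$, so that each negated formula of $\Delta$ can legitimately be read as a positively\-translated hypothesis of an enlarged antecedent --- and the verification that the residual multiset $\Delta'$ genuinely contains no negated formula, so that the side condition on $\Delta_1$ in \mythm{\ref{thm:foc-inv}} is satisfied.
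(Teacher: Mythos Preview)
Your proposal is correct and follows essentially the same route as the paper: split $\Delta$ into its negated and non-negated parts, absorb the bodies of the negated formulae into the antecedent via the identity $\gen(\neg B)^n = B^p$, apply \mythm{\ref{thm:foc-inv}} with $\Delta_2$ empty, and finish with $k$ applications of $\neg_R$. The only difference is that you spell out the $\gen$/translation bookkeeping more explicitly than the paper does.
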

\begin{proof}
Let $\neg C_1, \cdots, \neg C_n$ be the negated formul\ae{} of $\Delta$ and $\Delta'$ the other ones. We apply \mythm{\ref{thm:foc-inv}} to a $\Gamma$ composed of $\Gamma, C_1, \cdots, C_n$, a $\Delta_1$ composed of $\Delta'$, an empty $\Delta_2$ and finally a $D$ equal to $D$, which gives a proof of the sequent:
$$
\Gamma, C_1, \cdots, C_n \vdash \Delta', D
$$
to which we apply $n$ times the $\neg_R$ rule to get back a proof of the wanted sequent.
\end{proof}

\section{Conclusion and Further Work} \label{sec:conc}

In this paper, we have shown that polarized double-negation translations still are used to navigate between intuitionistic and classical logics. They are lighter in terms of double negation, and let more statements being invariant by translation.

For instance, consider the axiom $(A \land B) \limp (A \lor B)$. Kolmogorov's translation introduces 14 negations: $\neg\neg (\neg\neg (\neg\neg A \land \neg\neg B) \limp \neg\neg (\neg\neg A \lor \neg\neg B))$, while its (positive) polarized variant, only 10 of them: $(\neg\neg (\neg\neg A \land \neg\neg B) \limp (\neg\neg A \lor \neg\neg B)$. G{\"o}del-Gentzen's translation would be $(\neg\neg A \land \neg\neg B) \limp \neg (\neg \neg \neg A \land \neg \neg \neg B)$, introducing 11 negations, while its polarized version introduces only 4 of them: $(\neg \neg A \land \neg \neg B) \limp (A \lor B)$. Recent work from Fr\'ederic Gilbert tends to show that it is possible to go further in the removal of double-negations by turning double negations into an operator that analyses the structure of the formula it double-negates. The polarization of Krivine's translation remains also to be examined.\\

Polarized translations are particularity adapted to cut-free proofs; otherwise the same active formula may appear both in the left and the right hand sides. As the negative and positive translations of a formula usually differ, it is impossible to cut them back. The workaround can be a ``manual'' elimination of this cut by reductive methods up to the point where both translations become equal, or to loosen the intuitionistic cut rule. We can also decide not to bother with cuts by eliminating them a priori. In all cases, however, we rely on a cut-elimination theorem that does not hold in the general case of the application described below.\\

Polarized double-negation translations has been primarily designed to fit {\em polarized deduction modulo} \cite{GDow10}, an extension of first-order logic by a congruence on formul\ae{} that is generated by {\em polarized} rewrite rules that apply only on a given side of the turn-style. It has already led to interesting results \cite{GBur10,GBur11} in automated theorem proving within axiomatic theories. To support this approach, we must ensure the cut-elimination property of the (sequent calculus modulo the polarized) rewrite system.

One canonical way is to first show proof normalization for the natural deduction, and shift this result to the intuitionistic sequent calculus. Then, through a double-negation translation of {\em the rewrite system} this result can be extended to the classical sequent calculus \cite{GDowBWer03}. In case of polarized rewriting, a polarized translation can be of great help for this last step, in addition to the development of normalization proofs via reducibility candidates. Another way to get cut admissibility would be to develop semantic proofs.

Lastly, it could be interesting to investigate whether, even in absence of cut admissibility as it can be the case, the modularity of our translations can be enforced, or whether cuts between two differently translated left- and right-formul\ae{} can nevertheless be eliminated. We conjecture that this is possible, provided the rewrite relation is confluent and terminating.\\

\noindent {\em Acknowledgments.} The authors would like to thank P. Jouvelot for his comments and the choice of the title of this paper.

\bibliographystyle{splncs}
\bibliography{betternot}

\end{document}